\journal{arXiv}
\newtheorem{theorem}{Theorem}
\newtheorem{lemma}{Lemma}
\newproof{proof}{Proof}
\begin{document}

\begin{frontmatter}

\title{RAC Drawings in Subcubic Area}

\author[mymainaddress]{Zahed Rahmati\corref{mycorrespondingauthor}}
\ead{zrahmati@aut.ac.ir}
\cortext[mycorrespondingauthor]{Corresponding author}

\author[mymainaddress]{Fatemeh Emami}
\ead{fatima.emami13}

\address[mymainaddress]{Department of Mathematics and Computer Science, Amirkabir University of Technology}

\begin{abstract}
In this paper, we study tradeoffs between curve complexity and area of Right Angle Crossing drawings (RAC drawings), which is a challenging theoretical problem in graph drawing. Given a graph with $n$ vertices and $m$ edges, we provide a RAC drawing algorithm with curve complexity $6$ and area $O(n^{2.75})$, which takes time $O(n+m)$. Our algorithm improves the previous upper bound $O(n^3)$, by Di Giacomo~\emph{et al.}, on the area of RAC drawings.
\end{abstract}

\begin{keyword}
Graph Drawing \sep RAC drawings \sep Subcubic Area  \sep Curve complexity
\end{keyword}

\end{frontmatter}


\section{Introduction}

Drawing non-planar graphs with forbidden or desired crossing configurations is a topic that has received growing attention in the last decade; this topic is often recognized as graph drawing beyond planarity~\cite{DidimoSurvey2019}. In addition to guaranteeing specific properties for the edge crossings, some other geometric optimization goals can be considered, such as, for example, minimizing the number of edge bends or minimizing the area occupied by the drawing~\cite{Purchase1997}. The trade-off between maximum number of bends per edge and area requirement has been particularly studied for {\it RAC drawings} (Right Angle Crossing drawings), {\it i.e.}, drawings in which every two crossing edges are orthogonal~\cite{DIDIMO20115156}. RAC drawings are motivated by cognitive studies showing that large crossing angles do not affect too much the readability of a graph layout~\cite{4126225,4475457,HUANG2014452}. 
Deciding whether a graph admits a ($1$-planar) straight-line RAC drawing is NP-hard~\cite{SOFSEM2011Argyriou,BEKOS201748}. 
Didimo~{\it et al.}~\cite{DIDIMO20115156} proved that graphs that admit a straight-line RAC drawing have at most $4n-10$ edges, which is a tight bound. 

A {\it $k$-bend RAC drawing} is a RAC drawing where each edge is a polyline with at most $k$ bends, $k>0$. 
The {\it curve complexity} of a polyline drawing is the maximum number of bends per edge in the drawing. The {\it area} of a drawing is the size of the smallest axis-aligned box enclosing the drawing on an integer grid, in which the vertices and bends are located at grid points.
The $1$-bend RAC drawings with $n$ vertices cannot have more than $5.5n-O(1)$ edges and there are infinitely many $1$-bend RAC graphs  with exactly $5n-O(1)$ edges~\cite{GD201Angelini8}. The $2$-bend RAC drawings can have at most $74.2n$ edges~\cite{ARIKUSHI2012169}.

Didimo~\emph{et al.}~\cite{DIDIMO20115156} proved that every graph has a RAC drawing with curve complexity $3$ (which can be drawn in area $O(n^4)$). 
Di Giacomo~\emph{et al.}~\cite{DiGiacomo2011} studied the tradeoffs between area and curve complexity. They proved that, by increasing the curve complexity to $4$, the area can be improved: Every graph admits a RAC drawing with curve complexity $4$ and area $O(n^3)$. 

Didimo and Liotta~\cite{Pach2013} posed a {\it theoretical open problem}: Is it possible to compute RAC drawings of graphs in $o(n^3)$ area with curve complexity 4? 

Recently, F{\"o}erster and Kaufmann~\cite{GD2019Foerster} presented some contributions as a poster in GD~2019: Every graph on $n$ vertices admits a RAC drawing with 3 bends per edge in $O(n^3)$ area, which improves the result in~\cite{DiGiacomo2011}; there exists no RAC drawing of $K_n$ with 3 bends per edge in $O(n^2)$ area for sufficiently large $n$; every graph admits a RAC drawing with 8 bends per edge in $O(n^2)$ area. Our result, in this paper, is somehow in the middle.

\paragraph{Our contribution}
Given a graph $G$ with $n$ vertices and $m$ edges, we provide a RAC drawing algorithm for drawing $G$ on an integer grid with curve complexity $6$ and area $O(n^{2.75})$. Our algorithm takes time $O(n+m)$.
We in fact obtain improvement on the previous result on the area of RAC drawings by Di Giacomo~\emph{et~al.}~\cite{DiGiacomo2011}, by adding two more bends on each edge. In other words, since the area of a RAC drawing for $K_n$ is $\Omega(n^2)$~\cite{DiGiacomo2011}, we make the gap between the lower and upper bound on the area of RAC drawings smaller.
\section{$6$-Bend RAC drawing in area $O(n^{2.75})$}
In order to prove that every $n$-vertex graph admits a RAC drawing with curve complexity 6 in $O(n^{2.75})$, we describe how to compute such a drawing for the complete graph $K_n$.
\paragraph{\bf Placing the vertices}
Assume $\sqrt[4]{n}$ is an integer. Partition the vertices of $K_n$ into $\sqrt{n}$ groups, each with $\sqrt{n}$ vertices, and place the vertices of each group at a level where the vertical distance of every two consecutive levels is $8\sqrt[4]{n^3}+\sqrt[4]{n}+1$. Denote the $\sqrt{n}$ vertices of level $i$ by $V_{i,1},\dots,V_{i,\sqrt{n}}$, from left to right, respectively. The {\it highest level} is where $i=1$ and the {\it lowest level} is where $i=\sqrt{n}$. 

For each level $i$, we place the vertices $V_{i,j}$, $j=1,\dots,\sqrt{n}$, in such a way that all the vertices have the same vertical coordinate and the distance between two consecutive vertices $V_{i,j}$ and $V_{i,{j+1}}$ is $n+1$. For two consecutive levels $i$ and $i+1$, we place their first vertices $V_{i,1}$ and $V_{i+1,1}$ in such a way that the horizontal distance between $V_{i,1}$ and $V_{i+1,1}$ is $\sqrt{n}+8$, \emph{i.e.,} the vertices of the next level are shifted to the right $\sqrt{n}+8$ units; see Figure~\ref{fig:PlacingPnts}.
\begin{figure}
\centering
\includegraphics[scale=.9]{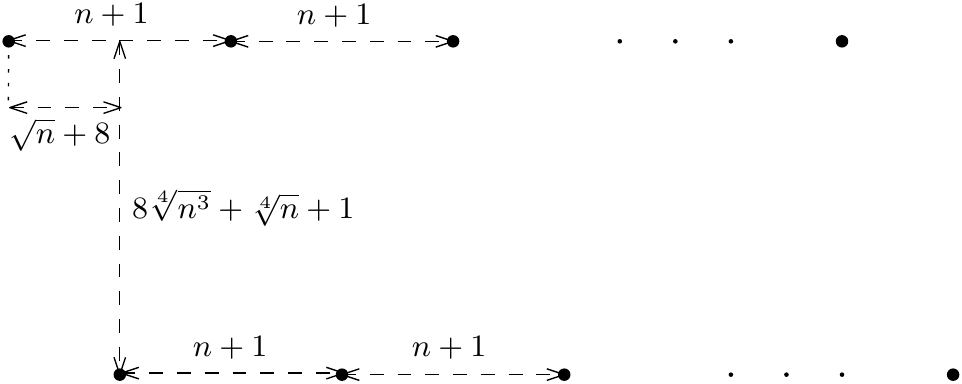}
\caption{Placing the vertices of two consecutive levels.}
\label{fig:PlacingPnts}
\end{figure}

Next we show how to draw the edges of $K_n$, each with six bends. We use two slopes $\alpha={1/ {\sqrt[4]{n^3}}}$ and $-1/ \alpha$ for the {\it intersecting} segments of the edges.
In our technique, we draw the edges from a vertex $V_{i,j}$ at a {\it higher} level and bend the edges to get connected to some other vertex $V_{u,w}$ at a {\it lower} level.
Figure~\ref{fig:RACdrawing} shows a RAC drawing of some portion of $K_{16}$, which in fact would be a RAC drawing for $K_8$, with our technique that we describe as follows.
\begin{figure}
\centering
\includegraphics[scale=.36]{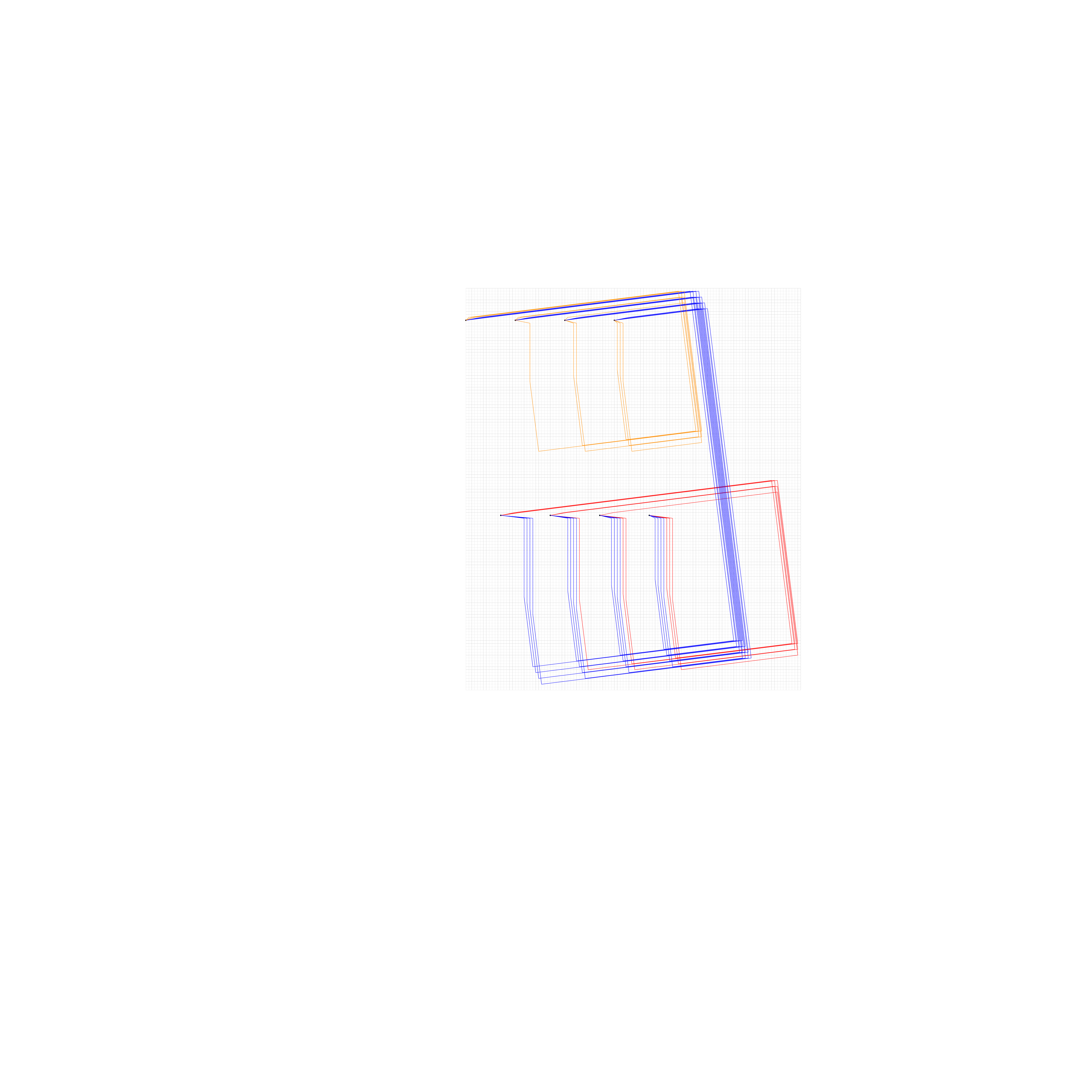}
\caption{A RAC drawing of some portion of $K_{16}$, that is the drawing of the edges incident to vertices of the first and second levels, which in fact is a RAC drawing for $K_8$.  A zoom-in of the pdf file gives more details of the drawing.}
\label{fig:RACdrawing}
\end{figure}
\paragraph{\bf First bend}
Let $(x,y)$ be the coordinates of $V_{i,j}$, \emph{i.e.}, the $j$th vertex of the $i$th level.
For any $V_{i,j}$, $1\leq i \leq \sqrt{n}$, the first bends of the edges incident to $V_{i,j}$ are placed at $(x+(i-1)\sqrt{n}+1,~y+1), \dots, (x+i\sqrt{n}-j,~y+1), (x+i\sqrt{n}+1,~y+1), \dots, (x+n,~y+1)$. 


We denote by $a_k(V_{i,j})$, $k=(i-1)\sqrt{n}+1, \dots, i\sqrt{n}-j, i\sqrt{n}+1, \dots, n$ the first bends of the edges incident to $V_{i,j}$, from left to right respectively. Figure~\ref{fig:1stBend} depicts the first bends and the segments of the edges incident to $V_{1,1}$, $V_{2,2}$, $V_{3,3}$, and $V_{4,4}$ for $n=16$. 
Denote by $S_1$ the set of all the first segments of the edges incident to all the vertices. 
\begin{figure}
\centering
\includegraphics[scale=.8]{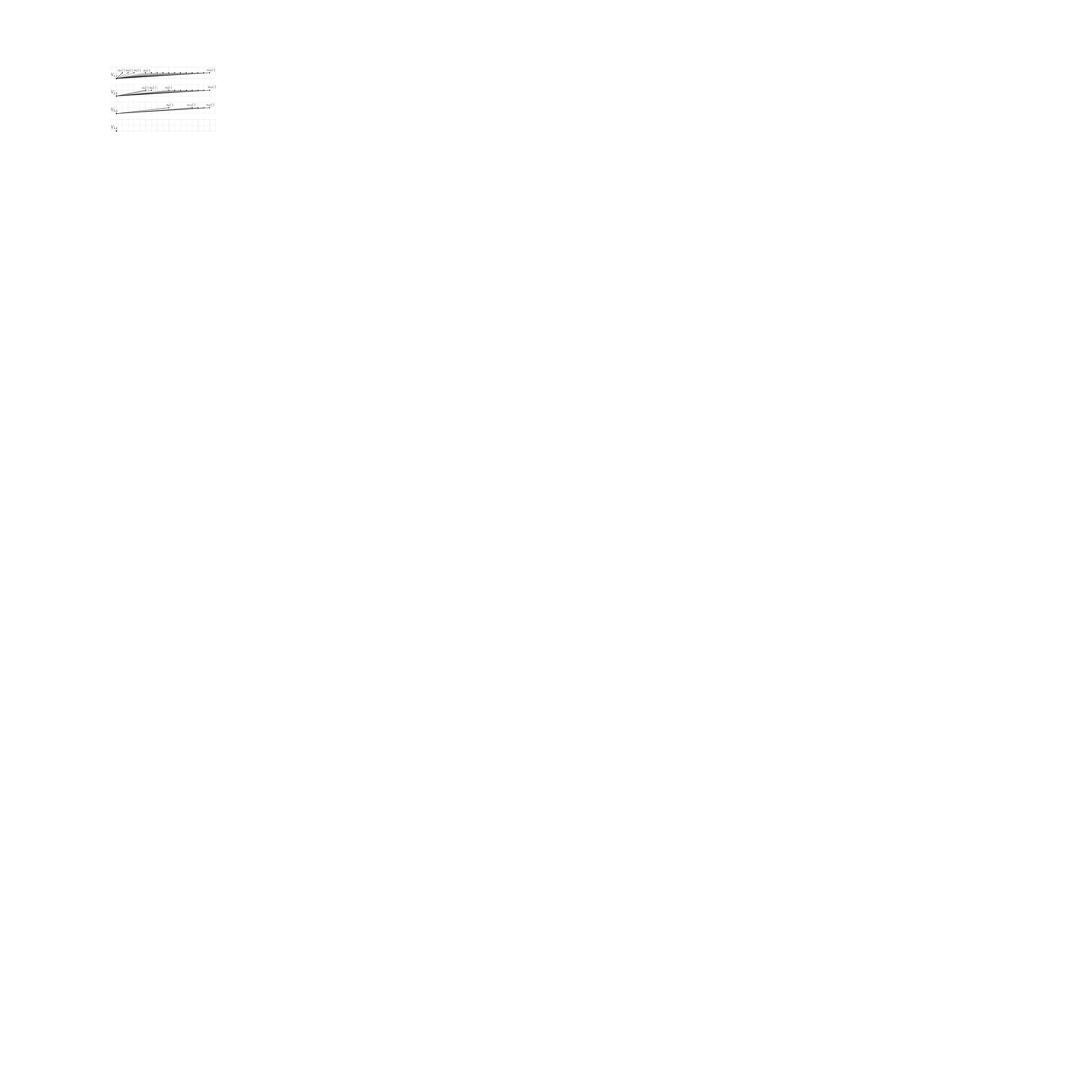}
\caption{The first bends and the segments of the edges incident to the vertex $V_{1,1}$, $V_{2,2}$, $V_{3,3}$, and $V_{4,4}$,  for $n=16$.}
\label{fig:1stBend}
\end{figure}
\paragraph{\bf Second bend}
Draw a segment with slope $\alpha$ from $a_k(V_{i,j})$ to some corresponding point, say $b_k(V_{i,j})$, such that the vertical distance between $a_k(V_{i,j})$ and $b_k(V_{i,j})$ is $(\sqrt{n}-j+i)\sqrt[4]{n}+1$. Denote by $S_2$ the set of all the second segments of the edges incident to all the vertices. For $n=16$, Figure~\ref{fig:2ndBend} depicts the second bends and the segments of the edges incident to $V_{1,4}$.
\begin{figure}
\centering
\includegraphics[scale=.5]{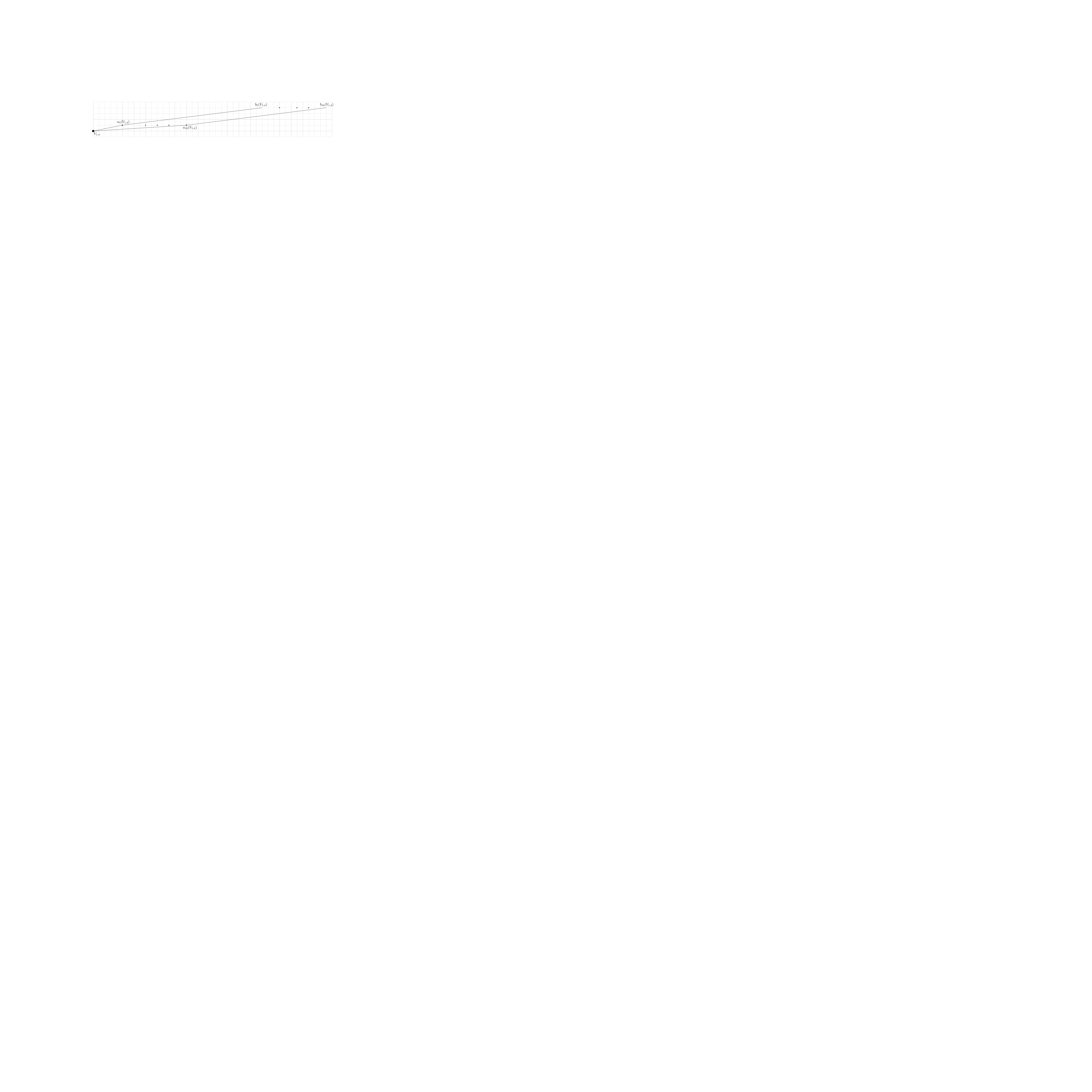}
\caption{The second bends, denoted by $b_1(V_{1,4}),\dots,b_{16}(V_{1,4})$, and the segments of the edges incident to $V_{1,4}$.}
\label{fig:2ndBend}
\end{figure}
\begin{figure}
\centering
\includegraphics[scale=.6]{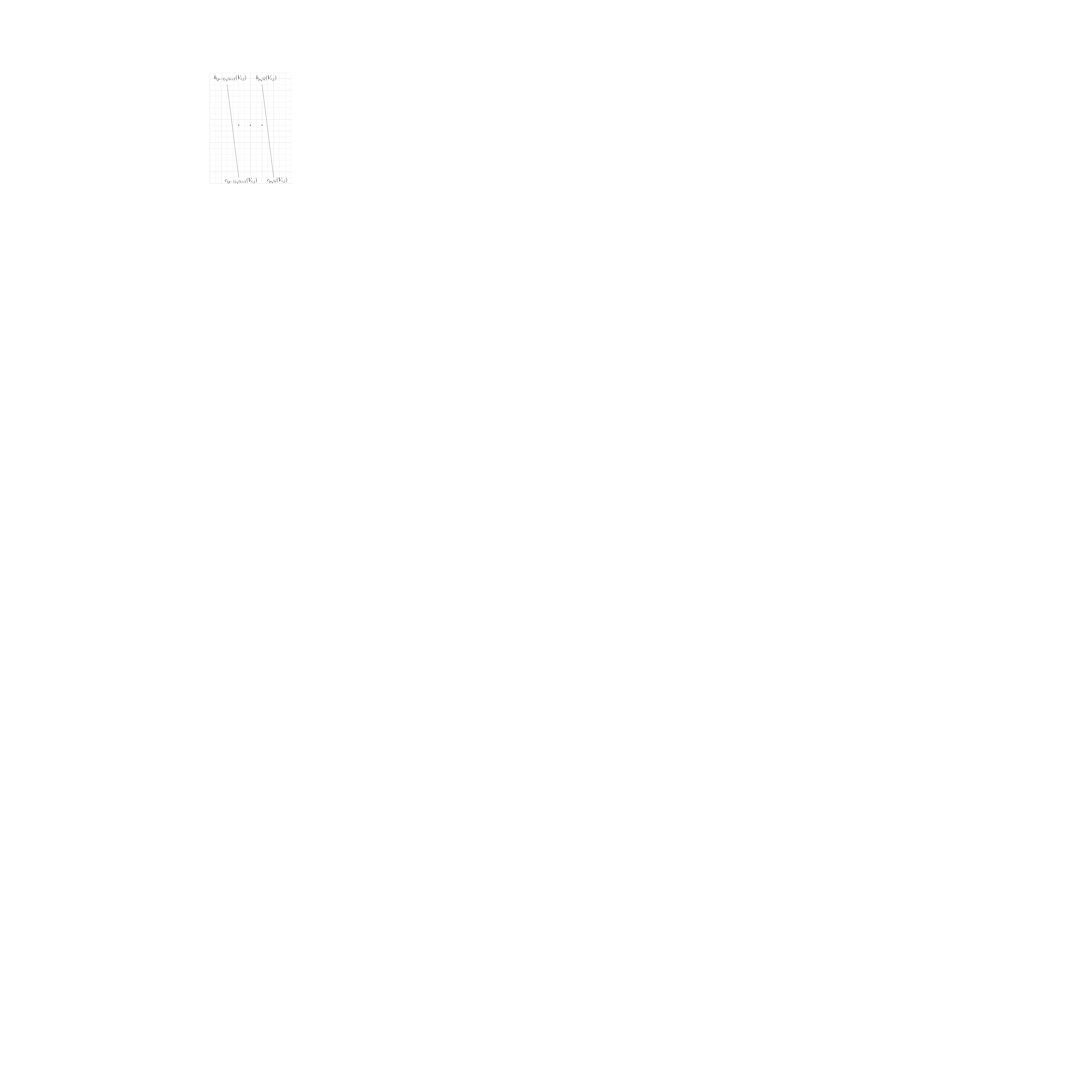}
\caption{The third bends, denoted by $c_{(p-1)\sqrt{n}+1}(V_{i,j}),\dots,c_{p\sqrt{n}}(V_{i,j})$), and the segments of the edges incident to some vertex $V_{i,j}$ at a level $i$, where $i<p$.}
\label{fig:3rdBend}
\end{figure}
\paragraph{\bf Third bend}
In our drawing technique, we draw the edges from a vertex at a {\it higher} level to some other vertex at a {\it lower} level. We used an idea for indices $k$ of the bends to easily find out where we want to bend in some lower level.
Let $p=\lceil{ k / \sqrt{n}} \rceil$.  
The value $p$ gives the label of a level where we want to make the third bend in order to connect to the $p$th level's vertices from higher level's vertices. 
Note that all the edges incident to vertices of {\it higher levels} whose second bends are indexed by $b_{(p-1)\sqrt{n}+1}(\cdot), \dots, b_{p\sqrt{n}}(\cdot)$  bend to get connected to the vertices at level $p$. 
Also note that, the edges incident to the vertices of {\it level $p$} whose second bends are indexed by  $b_{(p-1)\sqrt{n}+1}(\cdot), \dots, b_{p\sqrt{n}-1}(\cdot)$ bend to get connected to the vertices of the same level $p$.
%
In particular, 
we draw the third segment with slope $-1/\alpha$ from $b_k(V_{i,j})$ to some corresponding point, say $c_k(V_{i,j})$, such that the vertical distance between $b_k(V_{i,j})$ and $c_k(V_{i,j})$ is $(8p-8i+7)\sqrt[4]{n^3}$. Denote by $S_3$ the set of all the third segments of the edges incident to all the vertices. Figure~\ref{fig:3rdBend} depicts the third bends and the segments of the edges incident to some vertex $V_{i,j}$, where $i<p$.
\paragraph{\bf Fourth bend}
We draw the fourth segment with slope $\alpha$ from $c_k(V_{i,j})$ to enter to level $p$, where $p=\lceil{ k/ \sqrt{n}} \rceil$. Let $q=(k-1\mod\sqrt{n})$. For the points $c_k(V_{i,j})$ of the vertices of the vertices $V_{i,j}$, we draw the fourth segment  with slope $\alpha$ from $c_k(V_{i,j})$ to some corresponding point, say $d_k(V_{i,j})$, such that horizontal distance between $c_k(V_{i,j})$ and $d_k(V_{i,j})$ is $(i+q)n+\sqrt[4]{n^3}$. Denote by $S_4$ the set of all the fourth segments of the edges incident to all the vertices. 
Figure~\ref{fig:4thBend} depicts the fourth bends and the segments of the edges incident to some vertex $V_{i,j}$ of a level $i$ entering to level $p$, where $i<p$. 
\begin{figure}
\centering
\includegraphics[scale=.5]{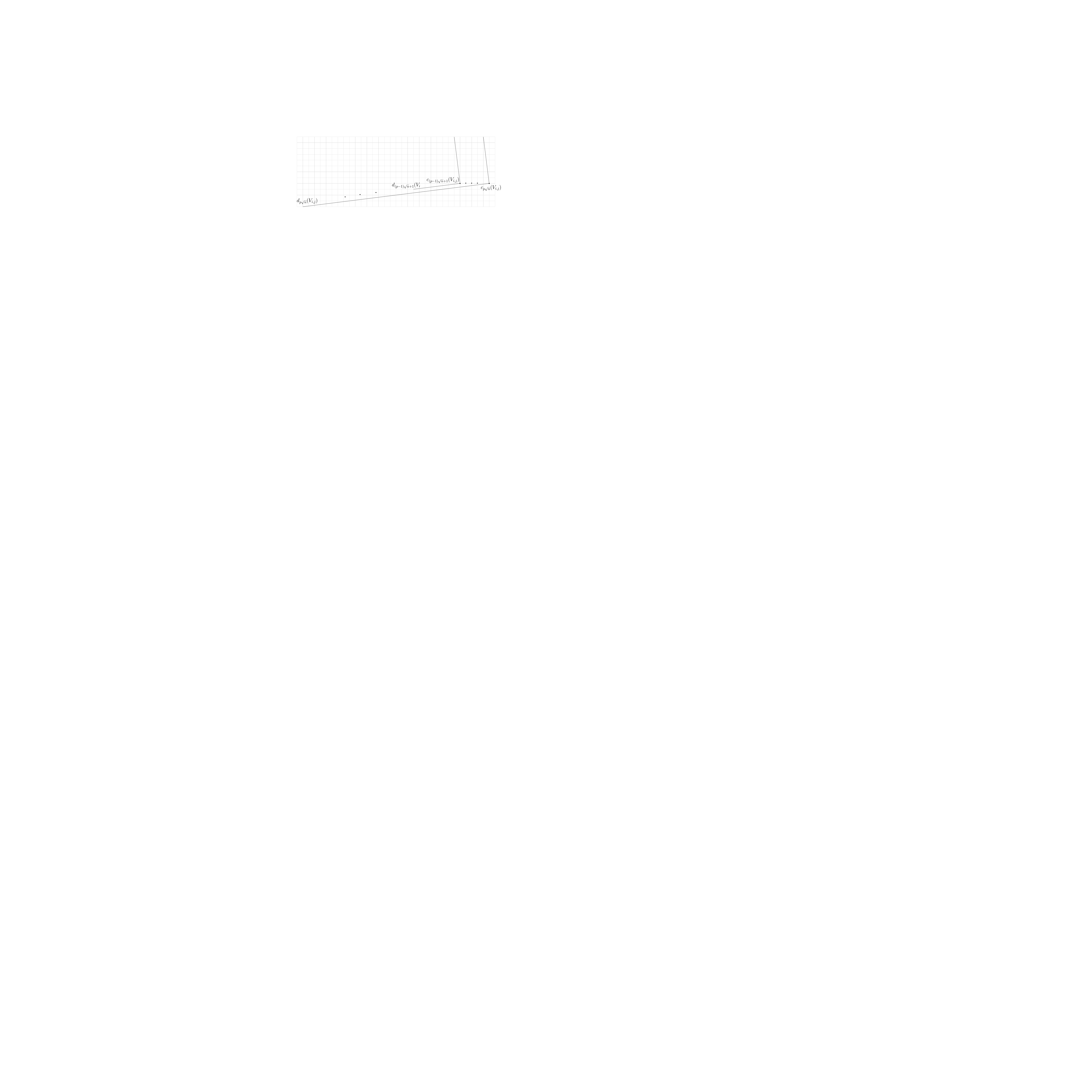}
\caption{The fourth bends, denoted by $d_{(p-1)\sqrt{n}+1}(V_{i,j}),\dots,d_{p\sqrt{n}}(V_{i,j})$, and the segments of the edges incident to some vertex $V_{i,j}$ at higher level $i$ entering to level $p$.}
\label{fig:4thBend}
\end{figure}
\paragraph{\bf Fifth and Sixth bends}
We draw the fifth segment with slope $-1/\alpha$ from $d_k(V_{i,j})$ to some corresponding point, say $e_k(V_{i,j})$, such that the vertical distance between $d_k(V_{i,j})$ and $e_k(V_{i,j})$ is $3\sqrt[4]{n^3}$. Denote by $S_5$ the set of all the fifth segments of the edges incident to all the vertices.
Figure~\ref{fig:4thBend2} depicts the fifth bends and the segments of all the edges incident to vertices of the highest level, for $n=16$.

Finally, we draw the sixth segment from $e_k(V_{i,j})$ to some corresponding point, say $f_k(V_{i,j})$, which has the same horizontal coordinate as of $e_k(V_{i,j})$ and the vertical coordinate of $f_k(V_{i,j})$ is the vertical coordinate of $V_{p,j}$ minus one.
Denote by $S_6$ the set of all the sixth segments of the edges incident to all the vertices.

Note that the last bends $f_k(V_{i,j})$ of the edges incident to vertices will be connected to $V_{p,\sqrt{n}-q}$.
Figure~\ref{fig:56thBend} depicts the sixth bends and the segments (and seventh segments) of all the edges incident to a vertex of some higher level $i$ entering to level $p$, where $i<p$.
Denote by $S_7$ the set of all the seventh segments of the edges incident to all the vertices.
\begin{figure}
\centering
\includegraphics[scale=.55]{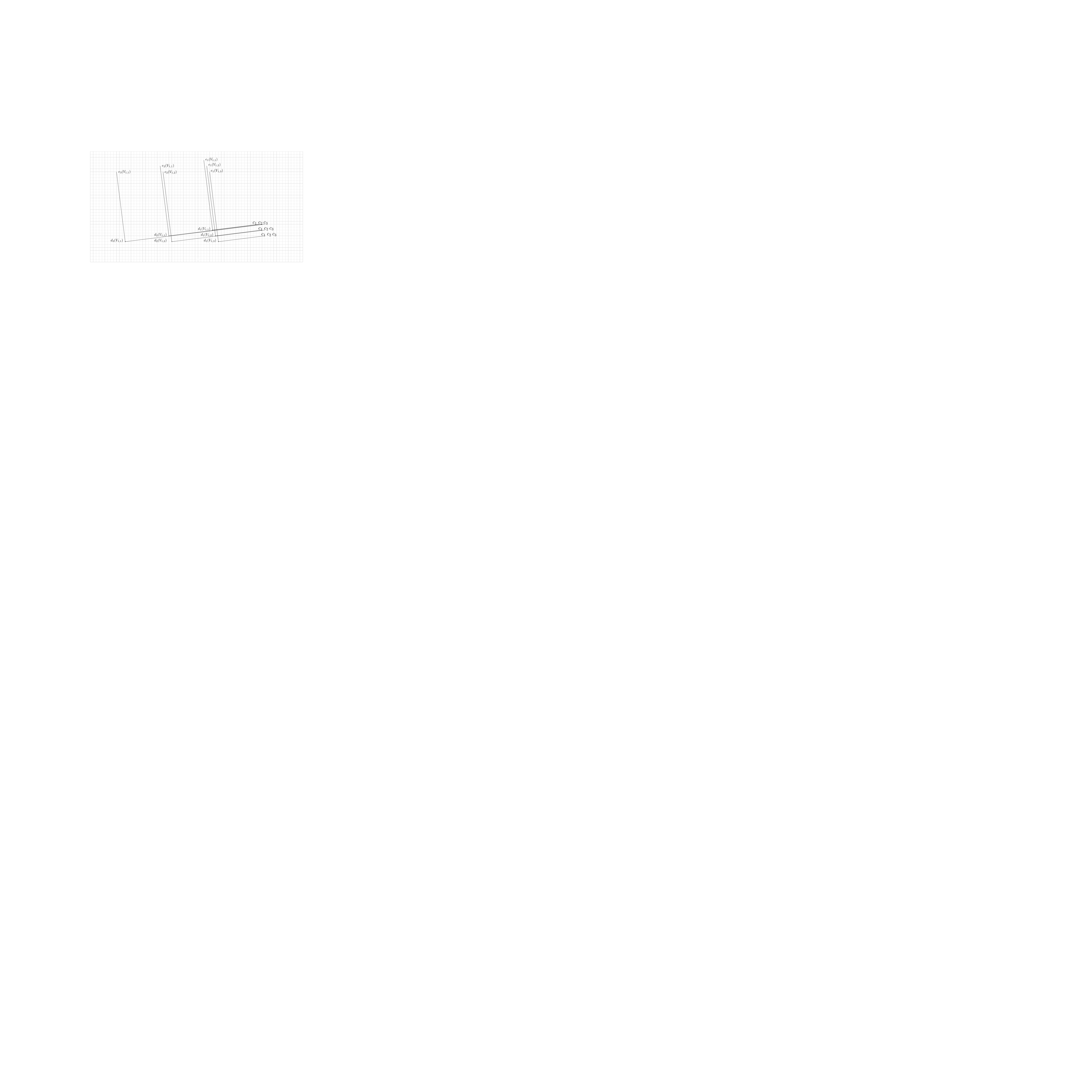}
\caption{The fifth bends and the segments of the edges incident to the vertices $V_{1,1},V_{1,2},V_{1,3}$, and $V_{1,4}$ of the highest level entering to the same level, for $n=16$. }
\label{fig:4thBend2}
\end{figure}

\begin{figure}
\centering
\includegraphics[scale=.5]{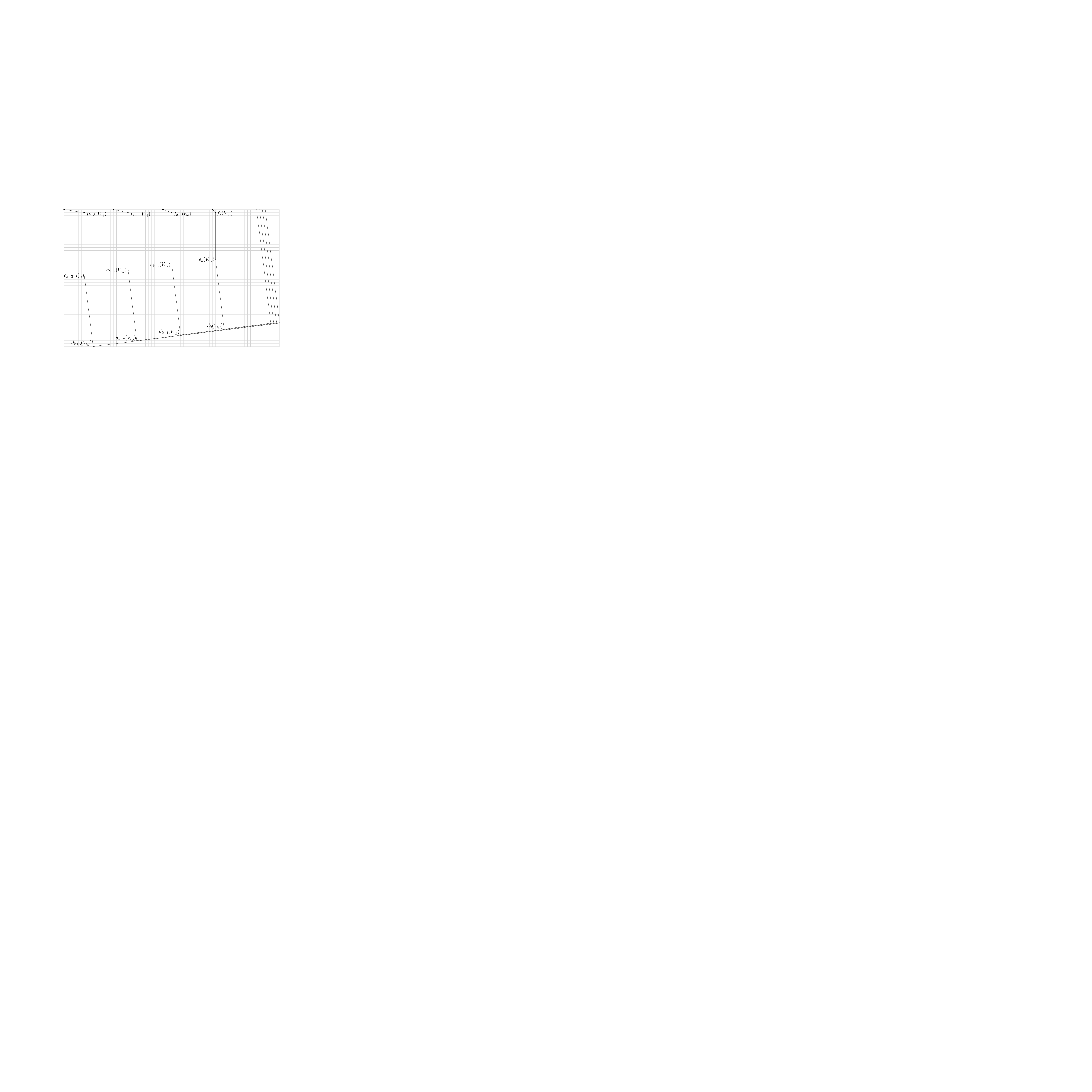}
\caption{The sixth bends of the edges incident to some vertex $V_{i,j}$ of a higher level $i$ entering to some level $p$, where $i<p$. }
\label{fig:56thBend}
\end{figure}

Now we can summarize our technique for drawing the edges of $K_n$. For each pair of vertices $V_{i,j}=(x(V_{i,j}),y(V_{i,j}))$, the $j$th vertex at level $i$ and $V_{u,w}=(x(V_{u,w}),y(V_{u,w}))$, the $w$th vertex at level $u$, where $i\leq u$,  the six bends of the edge connecting from $V_{i,j}$ to $V_{u,w}$ are placed at $a, b, c, d, e$ and $f$:
\begin{itemize}
\item $a =(x(V_{i,j})+(u-1)\sqrt{n}+\sqrt{n}-w+1,~y(V_{i,j})+1)$,
\item $b =(x(a)+(\sqrt{n}-j+i)n+\sqrt[4]{n^3},~y(a)+(\sqrt{n}-j+i)\sqrt[4]{n})+1)$,
\item $c =(x(b)+8(u-i+1)-1,~y(b)-(8(u-i+1)-1)\sqrt[4]{n^3})$,
\item $d =(x(c)-(i+\sqrt{n}-w)n-\sqrt[4]{n^3},~y(c)-(i+\sqrt{n}-w)\sqrt[4]{n}-1)$,
\item $e =(x(d)-3,~y(d)+3\sqrt[4]{n^3})$,
\item $f =(x(e),~y(V_{u,w})-1)$.
\end{itemize}
Since there are $n$ placements for the vertices of $K_n$ in $\sqrt{n}$ levels, and each edge has six bends, the following lemma holds.
\begin{lemma}\label{the:ConsTime}
Our drawing algorithm for $K_n$ takes time $O(n+m)=O(n^2)$, where $m$ is the number of edges in $K_n$. 
\end{lemma}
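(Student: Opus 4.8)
The plan is to account separately for the cost of placing the vertices and the cost of computing the bends, and then add the two contributions. Since $K_n$ has $m=\binom{n}{2}$ edges, the bound $O(n+m)=O(n^2)$ will follow immediately once each part is shown to be linear in the number of objects it processes.

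First I would argue that placing the $n$ vertices takes $O(n)$ time. The vertices are organized into $\sqrt{n}$ levels of $\sqrt{n}$ vertices each, and the coordinates of $V_{i,j}$ are determined by closed-form expressions in $i$, $j$, and $n$: the vertical coordinate is fixed by the inter-level distance $8\sqrt[4]{n^3}+\sqrt[4]{n}+1$ together with the level index $i$, while the horizontal coordinate is fixed by the intra-level spacing $n+1$, the position $j$, and the per-level rightward shift $\sqrt{n}+8$. Each such coordinate is therefore obtained with a constant number of arithmetic operations, so all $n$ vertices are placed in $O(n)$ total time.

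Next I would bound the cost of routing the edges. Each edge of $K_n$ joins some $V_{i,j}$ to some $V_{u,w}$ with $i\le u$, and its six bends $a,b,c,d,e,f$ are given by the explicit formulas listed just before the statement. Each formula computes the new bend from the previous one (or from the endpoint $V_{i,j}$) using only a constant number of additions and multiplications involving the fixed parameters $i,j,u,w$ and $n$; in the standard word-RAM model these operations cost $O(1)$ each. Hence producing the entire polyline for a single edge takes $O(1)$ time, and drawing all $m$ edges takes $O(m)$ time. Summing the two contributions yields $O(n)+O(m)=O(n+m)$, and since $K_n$ has $m=\binom{n}{2}=\Theta(n^2)$ edges this simplifies to $O(n^2)$.

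I do not expect a genuine obstacle here, as the statement is essentially a bookkeeping argument. The only point that deserves explicit mention is that every bend is given in \emph{closed form}, so that drawing an edge requires no search or iteration over the remaining vertices; this is precisely what the coordinate formulas for $a$ through $f$ guarantee, and it is what keeps the per-edge cost constant rather than linear in $n$.
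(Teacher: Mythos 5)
Your proposal is correct and matches the paper's (very brief) justification, which simply notes that there are $n$ vertex placements across $\sqrt{n}$ levels and that each edge has six bends given by closed-form coordinate formulas, hence $O(n)$ plus $O(1)$ per edge. Your write-up is just a more explicit version of the same bookkeeping argument, with the useful added remark that the closed forms avoid any per-edge search.
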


\begin{lemma}\label{the:KnResult}
The complete graph $K_n$ admits a RAC drawing with curve complexity $6$ and area $O(n^{11/4})$. The drawing  can be computed in $O(n+m)$ time.
\end{lemma}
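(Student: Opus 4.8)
The plan is to verify the four assertions of the lemma in turn, treating curve complexity, running time, and area as routine and concentrating the effort on the right-angle-crossing (RAC) property. Curve complexity is immediate: by construction the edge from $V_{i,j}$ to $V_{u,w}$ is the polyline through the six bends $a,b,c,d,e,f$, hence it has exactly six bends and is realized by the seven segments $S_1,\dots,S_7$. The running time follows from Lemma~\ref{the:ConsTime}, since each of the six bend coordinates is given by a closed-form expression evaluable in $O(1)$ time, so the entire drawing is produced in $O(n+m)$ time.

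For the area I would bound the width and height of the bounding box and multiply. The vertical coordinates spread over the $\sqrt n$ levels, whose consecutive gap is $8\sqrt[4]{n^3}+\sqrt[4]{n}+1=O(n^{3/4})$, together with an $O(n^{3/4})$ overshoot caused by the points $b$ lying above the top level; hence the height is $(\sqrt n-1)\,O(n^{3/4})=O(n^{5/4})$. Horizontally, the vertices of one level span $(\sqrt n-1)(n+1)=O(n^{3/2})$, and the largest single horizontal displacement is the one from $a$ to $b$, namely $(\sqrt n-j+i)n+\sqrt[4]{n^3}=O(n^{3/2})$, while every other displacement is of smaller order; so the width is $O(n^{3/2})$. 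Multiplying yields area $O(n^{5/4})\cdot O(n^{3/2})=O(n^{11/4})$, as claimed.

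The core of the proof is the RAC property, which I would establish by a slope classification. As fixed by the two intersecting slopes $\alpha=1/\sqrt[4]{n^3}$ and $-1/\alpha$, the segments $S_2$ and $S_4$ have slope $\alpha$, the segments $S_3$ and $S_5$ have slope $-1/\alpha$, the segments $S_6$ are vertical, and $S_1,S_7$ are unit-height connectors adjacent to a level. Since $\alpha\cdot(-1/\alpha)=-1$, any crossing between a segment of the family $S_2\cup S_4$ and one of the family $S_3\cup S_5$ is automatically a right-angle crossing, so it suffices to exclude every other type of crossing. I would argue in three steps. (i) Two segments of the same slope are parallel, so it remains only to rule out collinear overlaps; here the integer offsets in $a,\dots,f$ — in particular the vertex spacing $n+1$, the level shift $\sqrt n+8$, and the $\sqrt[4]{n^3}$ and $8(u-i+1)-1$ terms — are chosen precisely so that distinct edges occupy distinct parallel \emph{lanes}, and I would verify disjointness lane by lane. (ii) The non-intersecting segments — the near-horizontal $S_1$, the vertical $S_6$, and the short connectors $S_7$ — each lie in a thin axis-parallel region (a one-unit horizontal strip just above the source level for $S_1$, a narrow vertical corridor at the target vertex for $S_6$, and a one-unit strip just below the target for $S_7$) that the construction keeps disjoint both from one another and from the region swept by $S_2,\dots,S_5$; hence these segments meet nothing except at a shared bend of their own edge. (iii) In particular, within a level the first bends $a_k(V_{i,j})$ lie in $[\,x+(i-1)\sqrt n+1,\;x+n\,]$, strictly to the left of $V_{i,j+1}$, and consecutive levels are $\Omega(n^{3/4})$ apart, so connectors of the same type never cross.

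The step I expect to be the main obstacle is (i): certifying that no two same-slope intersecting segments overlap. This is where the magic constants in the placement and in the displacement formulas must be exploited, and the verification reduces to a bookkeeping argument tracking, for each of the two slopes, how the index $k$ — equivalently the pair $(p,q)$ with $p=\lceil k/\sqrt n\rceil$ and $q=(k-1\bmod\sqrt n)$ — assigns each edge a distinct lane entering level $p$. Once lane-disjointness is confirmed for $S_2,S_3,S_4,S_5$ and the strip-separation of the connectors is established, the three steps together show that every crossing is between the slope-$\alpha$ and the slope-$(-1/\alpha)$ family and is therefore a right angle, which with the area and timing bounds completes the proof.
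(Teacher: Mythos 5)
Your proposal follows essentially the same route as the paper: the perpendicularity of the two slopes $\alpha$ and $-1/\alpha$ makes every crossing between the slope-$\alpha$ family ($S_2,S_4$) and the slope-$(-1/\alpha)$ family ($S_3,S_5$) a right angle, the remaining segment families are confined to disjoint regions, and the width/height bounds $O(n^{3/2})\times O(n^{5/4})$ give the area $O(n^{11/4})$. The disjointness verification you flag as the main obstacle is in fact only asserted, not carried out, in the paper's own proof, so your plan is no less complete than the published argument.
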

\begin{proof}
The segments in $S_r$, for any $1\leq r \leq 7$, do not cross each other except possibly at their endpoints. The crossings can only happen in three cases: Between a segment in $S_2$ and a segment in $S_3$, between a segment in $S_3$ and a segment in $S_4$, and between a segment in  $S_4$ and a segment in $S_5$. In all these cases, the two crossing segments are orthogonal. Thus our approach gives a RAC drawing, which can be computed in $O(n+m)$ time (by Lemma~\ref{the:ConsTime}).

Since the \textit{leftmost} point of the drawing is $V_{1,1}$ and the \textit{rightmost} point is $c_{n-1}(V_{\sqrt{n},1})$, the {\it width} of the grid is $x(c_{n-1}(V_{\sqrt{n},1})) - x(V_{1,1}) = 2n\sqrt{n}+n+\sqrt[4]{n^3}+7\sqrt{n}-8 = O(n\sqrt{n})$.
Also, since the \textit{topmost} point is $b_{k}(V_{1,1})$, $1\leq k \leq n-1$, and the \textit{bottommost} point is $d_{n}(V_{\sqrt{n}-1,\sqrt{n}})$, the {\it height} of the grid is $y(b_{k}(V_{1,1})) - y(d_{n}(V_{\sqrt{n}-1,\sqrt{n}}))= 8n\sqrt[4]{n}+2\sqrt[4]{n^3}+\sqrt{n}-3\sqrt[4]{n}-1 = O(n\sqrt[4]{n})$.
Therefore, the grid size is $O(n^{11/4})$.

For the case when $\sqrt[4]{n}$ is not an integer, our technique still works and has the same complexities: Let $l= {\lceil \sqrt[4]{n} \rceil}$. For this case, we consider $l^2$ levels, each with at most $l^2$ vertices, and in the calculations of our technique we replace $n$ by $l^4$, $\sqrt{n}$ by $l^2$, and $\sqrt[4]{n}$ by $l$. Thus the grid size will be $O(l^{11})= O({\lceil \sqrt[4]{n} \rceil}^{11})=O(n^{11/4})$.
\end{proof}

From the fact that every graph with $n$ vertices is a subgraph of $K_n$, our results in Lemma~\ref{the:KnResult} holds for any given graph.
The following states the main result of this paper:
\begin{theorem}\label{the:main}
Every graph with $n$ vertices and $m$ edges admits a RAC drawing with curve complexity $6$ and area $O(n^{11/4})$, and the drawing can be computed in time $O(n+m)$.
\end{theorem}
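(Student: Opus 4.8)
The plan is to derive the theorem as an immediate consequence of Lemma~\ref{the:KnResult}, exploiting the fact that every $n$-vertex graph $G$ is a subgraph of the complete graph $K_n$. The essential observation is that the construction for $K_n$ assigns, independently of which edges are present, a fixed position to each vertex $V_{i,j}$ and a closed-form set of six bend coordinates $a,b,c,d,e,f$ to each \emph{potential} edge; hence one can simply restrict the drawing to those edges that actually belong to $G$. The real work has already been done in proving the lemma, so the theorem is a short deduction rather than a new construction.

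First I would fix an arbitrary bijection between the $n$ vertices of $G$ and the $n$ grid positions $V_{i,j}$ used in the $K_n$ construction, and place the vertices accordingly; this costs $O(n)$ time. Then, for each of the $m$ edges of $G$, I would compute its six bends directly from the explicit formulas for $a,b,c,d,e,f$ in terms of the two endpoint coordinates. Since each such evaluation is $O(1)$, routing all edges takes $O(m)$ time, giving total running time $O(n+m)$ — crucially, we never need to materialize the $\Theta(n^2)$ edges of $K_n$, which is exactly what lets us beat the $O(n^2)$ bound of Lemma~\ref{the:ConsTime} for sparse inputs.

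The remaining task is to certify that this restricted drawing retains the two geometric guarantees of Lemma~\ref{the:KnResult}. For the area bound, I would note that the vertices and bends of $G$'s edges form a subset of the grid points used for the full $K_n$ drawing, so the smallest enclosing axis-aligned box can only shrink; hence the area stays $O(n^{11/4})$. For the RAC property, I would invoke monotonicity of crossings under edge deletion: the set of crossing pairs in the drawing of $G$ is a subset of the crossing pairs in the drawing of $K_n$, and the geometric position of every retained segment is unchanged, so each surviving crossing is still one of the right-angle crossings certified by the lemma. Deleting edges can only remove crossings, never introduce new ones nor perturb the angle of an existing one.

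I expect the only step requiring genuine care — though it remains routine — is making this restriction argument airtight: one must confirm that the six bend coordinates the algorithm assigns to an edge of $G$ are \emph{identical} to those it would assign to the same vertex pair inside $K_n$, so that the induced subdrawing is literally the restriction of the $K_n$ drawing rather than a different routing that might create fresh crossings. Because the bend coordinates $a,b,c,d,e,f$ are given in closed form as functions of the endpoint positions alone, this identity is manifest, and the theorem follows.
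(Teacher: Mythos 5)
Your proposal is correct and matches the paper's own (very brief) deduction: the paper likewise obtains Theorem~\ref{the:main} by noting that every $n$-vertex graph is a subgraph of $K_n$ and restricting the drawing of Lemma~\ref{the:KnResult} to the edges present. Your added care about evaluating the closed-form bend coordinates only for the $m$ actual edges (to genuinely achieve $O(n+m)$ time) and about monotonicity of crossings under edge deletion is a welcome tightening of details the paper leaves implicit, but it is the same argument.
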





\end{document}